\documentclass[sigconf,9pt, nonacm]{acmart}

\usepackage{amsmath}

\newif\iffinalcopy \finalcopyfalse     %

\newcommand{\figorbox}[3]{%
  \IfFileExists{#1}{\includegraphics[width=#2]{#1}}%
  {\fbox{\parbox[c][#3][c]{\dimexpr#2-2\fboxsep-2\fboxrule}{\centering\small\texttt{\detokenize{#1}}}}}}

\begin{document}

\title{Cambium: Participant-Built Commitment Roots for Shared Digital Infrastructure}

\author{Mansoor ``Manny'' Ahmed}
\affiliation{%
  \institution{OpenOrigins Inc}
  \city{San Francisco}
  \country{USA}}
\affiliation{%
  \institution{University of Cambridge}
  \city{Cambridge}
  \country{United Kingdom}}
\email{m@openorigins.com}

\author{Dann Toliver}
\affiliation{%
  \institution{TODAQ}
  \city{Toronto}
  \country{Canada}}
  \affiliation{%
  \institution{University of Cambridge}
  \city{Cambridge}
  \country{United Kingdom}}
\email{dann.toliver@todaq.net}

\author{Ekaterina Pavlova}
\affiliation{%
  \institution{Independent researcher}
  \city{London}
  \country{United Kingdom}}
\email{ktpavlva@gmail.com}

\renewcommand{\shortauthors}{Ahmed et al.}

\begin{abstract}
Commitment services, such as timestamping, transparency logs and provenance
registries, let a party prove that it bound itself to a record without
publishing the record. Today an operator runs them, or a blockchain replaces the
operator at the cost of network-wide agreement on every write. We present
Cambium, a permissioned peer-to-peer service in which participants build each
cycle's commitment root together, so that no single party decides which inputs
the root includes. Fixed-size committees certify inputs and use a
recursive-doubling merge, in which disjoint groups recompute every
intermediate value, to spread root construction across all participants. A
successful cycle costs each node $O(c\log(n/c))$ messages for committee size
$c$, and each node stores its own $O(\log(n/c))$ proof per cycle, along with the
$O(n)$ membership roster. A Go prototype on AWS with $c=4$, 8--256 nodes and
simulated stragglers completes cycles in 12--148\,ms and verifies proofs in
under 1.5\,ms, and demonstrates logarithmic growth.
We show that
conflicting certificates identify their signers, that a proof binds a signed
input to its root, and that two conflicting roots both become final only if at
least a third of the nodes are Byzantine, except with probability exponentially
small in the number of committees. We also
set out the membership, proof-availability and exclusion questions a community
deploying Cambium must answer.
\end{abstract}

\begin{CCSXML}
<ccs2012>
<concept><concept_id>10010520.10010521.10010537.10010540</concept_id>
<concept_desc>Computer systems organization~Peer-to-peer architectures</concept_desc>
<concept_significance>500</concept_significance></concept>
<concept><concept_id>10002978.10003006.10003013</concept_id>
<concept_desc>Security and privacy~Distributed systems security</concept_desc>
<concept_significance>500</concept_significance></concept>
</ccs2012>
\end{CCSXML}
\ccsdesc[500]{Computer systems organization~Peer-to-peer architectures}
\ccsdesc[500]{Security and privacy~Distributed systems security}

\keywords{commitments, timestamping, transparency, Byzantine fault tolerance,
Merkle trees, peer-to-peer systems, local-first software}

\maketitle

\section{Introduction}

Many digital interactions need a commitment rather than a ledger: evidence that a
party bound itself to a record at a given point, which it cannot later disown and
which others can check. A cooperative registering dataset contributions, a
newsroom committing to a photograph's hash at capture, and an agent committing to
an offer before a counterparty acts all need that evidence across organisations.
None needs a total order over every action. Nor do the records need to live in
one place: local-first software~\cite{kleppmann2019localfirst} and per-agent
ledgers such as TrustChain~\cite{otte2020trustchain} keep records with their
participants, who disclose a record and its proof only when needed.

Timestamping and transparency logs provide compact inclusion
proofs~\cite{haber1991timestamp,rfc9162}, but an operator chooses the root's
contents. Blockchains distribute that control at the cost of network-wide
agreement on each write.

Cambium sits between the two. All participants build one root per cycle
together: no node proposes it or receives every input, and each keeps only its
own proof. Cambium does not aim to beat a single aggregator on latency or cost;
it aims to remove the aggregator's control over which inputs enter the root.
Omitting a committee's inputs needs a quorum of that committee or shows up as a
counted hole, while each node sends only $O(c\log(n/c))$ messages per cycle
(\S\ref{sec:protocol}). We establish local certificate uniqueness with
accountability, root-relative proof soundness and finality: conflicting final
roots mean
that at least a third of the nodes are Byzantine, except with probability
exponentially small in $n/c$ (\S\ref{sec:security}). Liveness is weaker:
withholding below that bound can stall cycles. Equal protocol roles also do not settle who may join,
whose inputs are lost when links fail, or who restarts a stalled network
(\S\ref{sec:operating}).

We contribute a commitment-service interface that separates record custody,
proof verification and root acceptance, illustrated by a use case
(\S\ref{sec:req}); the cycle trie and banyan trie (\S\ref{sec:design}); the cycle
protocol and a security analysis of accountability, proof soundness, finality
and availability (\S\ref{sec:protocol}--\ref{sec:security});
a Go prototype and a crash simulation (\S\ref{sec:eval}); and the requirements for operating Cambium as shared
infrastructure (\S\ref{sec:operating}).

\section{Overview and Scope}\label{sec:req}

We write \emph{node} for a participating computer, $n$ for the number of nodes
and $c$ for the committee size. Every node holds the \emph{roster}, the list of
all members' identities and public keys, which the genesis root $CR_0$ fixes and
only signed membership epochs change (\S\ref{sec:security}, \S\ref{sec:operating}).

In each cycle every node submits one input, a
digest, and committees of $c$ nodes certify these inputs (\S\ref{sec:protocol})
before all nodes jointly compute a cycle root over them. Each root commits to
its predecessor, so the roots a node accepts form a hash chain. We want five
properties: conflicting certificates identify their signers (accountability); a
proof verifies against a root only if its input was certified for that root
(proof soundness); an accepted root omits at most a fixed share of the
committees, set by $H_{\max}$, and counts each omission as a hole
(\S\ref{sec:protocol});
honest nodes never hold conflicting final roots (finality); and a node's
per-cycle cost does not grow with the length of the chain. Section~\ref{sec:security} establishes the first two,
and the fourth, while fewer than a third of nodes are Byzantine, with high
probability; the acceptance rule and cost
bounds of \S\ref{sec:protocol} give the third and fifth. A node exposes four
operations: submit an input for a future cycle, export a proof, verify a proof
against a root, and report the root it accepts for a cycle and whether that root
is final.

\textbf{Use case.} Consider news organisations that commit to photographs' hashes
at capture, so that a fact-checker can later confirm that a photograph existed,
unaltered, by a given cycle. The participants are hundreds to thousands of
organisations, each running one node that batches its photographers' captures.
They need commitment within about a minute, any number of commitments per cycle,
tolerance of members going offline, and rare roster changes. With $c=16$ Cambium
tolerates 5\% of members offline, and with $c=32$, 10\% (Table~\ref{tab:sim}); more
slows progress but does not affect safety. A newsroom node
submits the root of its local Merkle batch (\texttt{POST /commit}); at publication it exports a proof for
that cycle with the photograph's batch-inclusion claim (\texttt{GET /path}). A
fact-checker checks the committee signatures, recomputes the cycle root
(\texttt{POST /verifyProof}), and compares it with the final root that most of
several randomly chosen members report (\texttt{GET /root}).

\textbf{What a proof establishes.} A proof binds a signed digest to a cycle root
and a key, without attesting to the record's truth or the key's authority. The
cycle does not establish wall-clock time, and guessable
records need a salted digest $H(r\Vert x)$ with fresh $r$. After rollback
(\S\ref{sec:protocol}), proofs against a discarded root no longer verify, and
holders must re-submit. A proof against a root outside the chain the verifier
trusts provides no evidence about the claim.

\textbf{Batching and absence.} A complete committee packet shows that a digest
is absent from that packet. A claim that a node made no conflicting commitment
over a range of cycles additionally needs a namespace, a rule for unique
submissions, and the complete batches for those cycles. A hole
(\S\ref{sec:protocol}) establishes neither an empty input nor the absence of an
action. There is no globally readable ledger, and each node keeps the $O(n)$
roster, this storage cost is the main non-logarithmic growth factor in Cambium.

\section{Data Structures}\label{sec:design}

\subsection{Cycle trie}\label{sec:cycle}

A Merkle tree over the committees' packets yields a root, but if a few nodes
compute its upper levels they control it, and if every node computes every level,
every node receives every packet. Cambium instead follows recursive-doubling
all-reduce~\cite{thakur2005mpich}: at each level, distinct values halve as their
holders double, so after $\log_2(n/c)$ levels every node holds the root. To
tolerate faulty nodes, nodes act in groups of $c$, and a node accepts a value
only when a quorum of a group vouches for it. The group assignment below ensures
that every level-$l$ value is recomputed by $2^l$ disjoint groups, so a captured
group causes divergence that later cycles can expose (\S\ref{sec:protocol}); that
each node contacts $O(c)$ others per level; and that a node's co-members change
across levels, decorrelating the recomputations (Fig.~\ref{fig:cycle}).

\begin{figure*}[t]
\centering
\begin{minipage}[c]{0.28\textwidth}
\centering
\includegraphics[width=\linewidth]{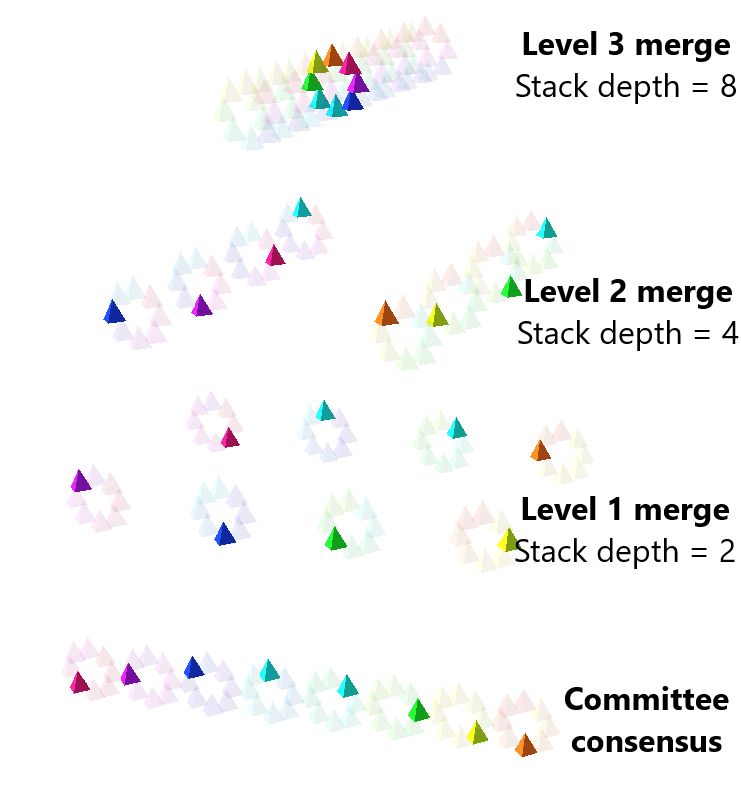}
\end{minipage}\hfill
\begin{minipage}[c]{0.68\textwidth}
\centering
\includegraphics[width=\linewidth]{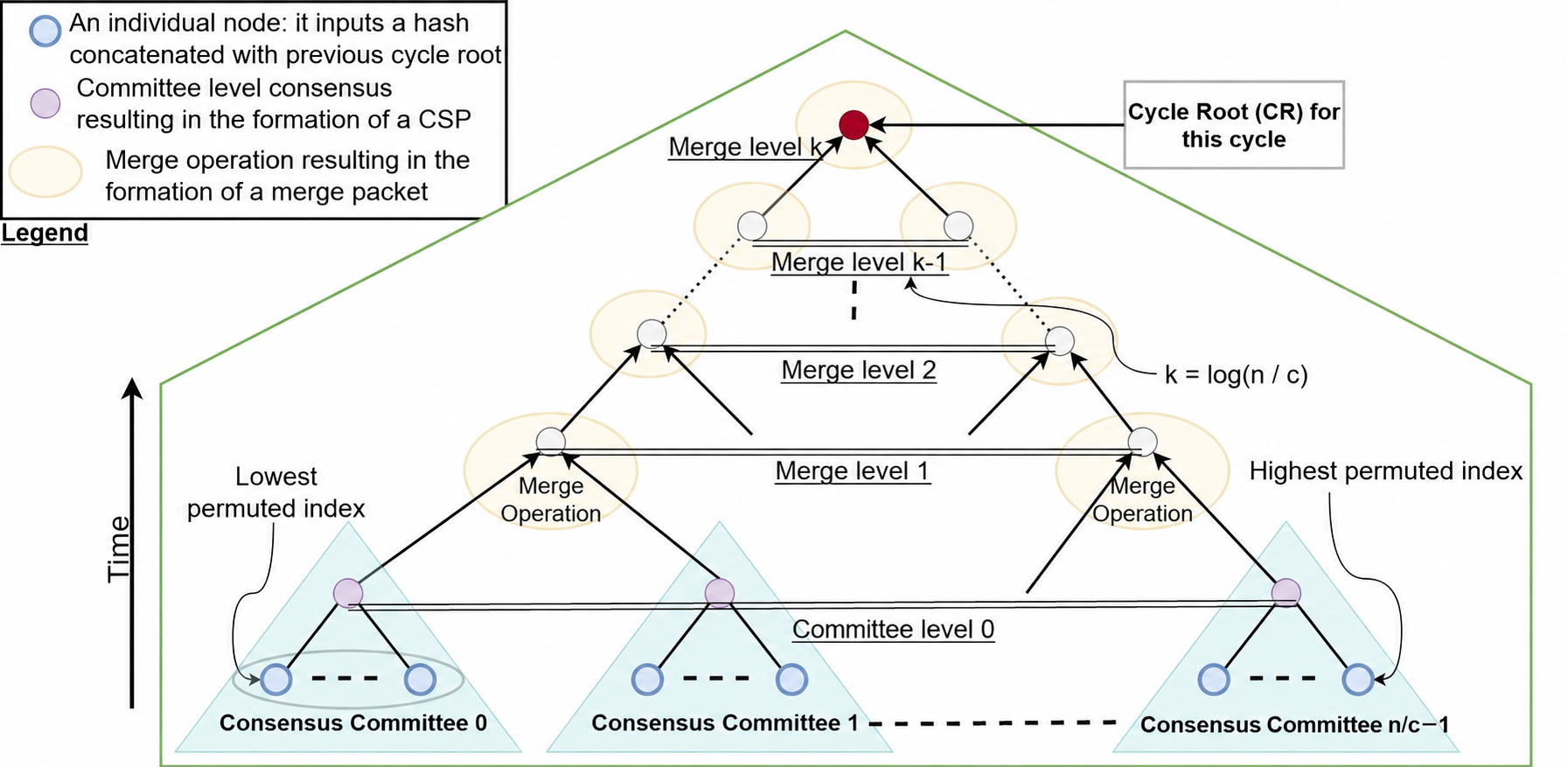}
\end{minipage}
\caption{Group layout and cycle-root construction. Left: an illustrative cycle
with $n=64$, $c=8$ and three merge levels. Right: signed committee packets merge
into $CR_t$; the image's $k$ is $L$.}
\Description{Left: four rows depict committees and three merge levels with
stack depth doubling. Right: committee packets merge through successive levels
into one cycle root.}
\label{fig:cycle}
\end{figure*}

Let $n=c\cdot 2^L$ with $c$ a power of two; other roster sizes need padding or
unequal committees, which we do not analyse. Each node has a cycle index $i$
(\S\ref{sec:banyan}). At level $l\in\{0,\dots,L\}$, where level 0 is the
committee and merges occur at levels 1 to $L$, the low $l$ bits of $i$
give its default group $q = i \bmod 2^l$, the next $\kappa=\log_2 c$ bits give
its position $j = \lfloor i/2^l\rfloor \bmod c$, and the high bits give its
stack. Its group is
\begin{equation}\label{eq:group}
\mathrm{group}(i,l) = (q - o_l[j]) \bmod 2^l,
\end{equation}
where $o_l$ is a fixed table of integer offsets. A level-$l$ stack is the
$c\cdot 2^l$ nodes that share the high bits; it holds $2^l$ groups of $c$ nodes,
one per position, and every group in a stack computes the same value. At level
0, $q=0$ and each stack is one committee of $c$ consecutive indices. Moving up a
level shifts the position field up one bit, halving the number of stacks and
doubling their depth. For example, with $n=16$, $c=4$ and the prototype's
diagonal offsets $o_l[j]=j\bmod 2^l$, node 6 is in committee $\{4,\dots,7\}$ at
level 0, in group $\{1,2,5,6\}$ of stack $\{0,\dots,7\}$ at level 1, and in group
$\{1,6,11,12\}$ at level 2.

The offsets permute membership within a stack, so root and proof validity do not
depend on $o_l$; agreement does, since $o_l$ sets how often two nodes share a
group across levels. Choosing $o_l$ is a combinatorial design problem we treat
separately.

\subsection{Banyan trie}\label{sec:banyan}

After each cycle, a small-domain pseudorandom permutation~\cite{Stefanov2012FastPRPFP},
keyed by the locally accepted root, maps identity indices to next-cycle indices.
This re-seats committees; assignment is unknowable until the preceding root
exists. Every packet carries its predecessor root,
linking successive cycle tries into the \emph{banyan trie}. Re-seating also samples
global health: nodes that built on different predecessor roots can meet in the
next cycle's committees, where a node can join, as a dissenter, a committee that
a foreign root's assignment places it in (\S\ref{sec:protocol}).

\section{Cycle Protocol}\label{sec:protocol}

\textbf{Committee phase.} A committee must turn its members' inputs into one
packet signed by $\tau=\lceil 2c/3\rceil$ members, so that its certificate is
unique (Lemma~\ref{lem:attr}). Members first exchange signed initial packets
carrying the cycle number, sender identity, cycle index, predecessor root and a
hash of the sender's input. Two rounds follow. In the \emph{hash round} members
agree on which input hashes the packet includes; this fixes the inputs before
anyone reveals them, so no member can choose its input after seeing others'. In
the \emph{data round} they reveal the inputs and agree on the committee signed
packet (CSP), the leaf of the cycle trie. Each round runs a single-slot
PBFT-style agreement~\cite{castro1999pbft} with bounded view changes; a
commit certificate contains $\tau$ member signatures. Without view changes, its
$O(c^2)$ messages cost each node $O(c)$; committee designs without agreement
are future work. A committee that
cannot decide within the view-change bound contributes a hole of
weight $c$, its keys; a hole is a fixed string, so all groups hash it alike.

\textbf{Merge phase.} At each merge level, a node combines the value of its own
side, which it already holds, with the value of the sibling side, which it must
learn from nodes that hold it. Level-$l$ stack $s$ is formed from
level-$(l{-}1)$ stacks $2s$ and $2s{+}1$. At merge level $l$, a node's
\emph{peers} are the other members of its level-$(l{-}1)$ group, which hold its
side's value, and its \emph{allies} are the group with the same group number in
the sibling level-$(l{-}1)$ stack, which hold the other side's value; the
relation is symmetric, and both sets follow from Eq.~\eqref{eq:group} and the
roster. In the example of \S\ref{sec:cycle}, node 6's allies at level 2 are $\{9,10,13,14\}$,
one of the two groups that compute the value of nodes 8--15. Nodes exchange
signed merge packets with peers and allies. A node accepts the ally value
carried by $\tau$ distinct allies, discarding any ally seen sending two values,
and otherwise records a hole. With $MP_0$ the CSP, it then computes, for
$l\ge1$,
\[
MP_l = H\!\left(MP^{\mathrm{left}}_{l-1} \,\Vert\, MP^{\mathrm{right}}_{l-1} \,\Vert\, CR \,\Vert\, HC \,\Vert\, DC \,\Vert\, l\right),
\]
where $CR$ is the predecessor root, $HC$ the hole count and $DC$ the dissent
count. A hole at level $l\ge1$ adds $c\,2^{l-1}$ to $HC$, the number of keys
it hides, so $HC\le n$. Merges sum $HC$ and $DC$ over disjoint subtrees.

Merge packets show their work. With a window of $w$ levels, a node attaches to
each merge packet the $w$ levels of merges beneath the value it vouches for, with
their signatures, so that a receiver can recompute that value, counts included, from certified
inputs; alternatively, or as well, it attaches its own merge path down to its
committee. A group with a Byzantine quorum, $\tau$ Byzantine members, therefore
cannot invent a value. It
can record a hole for a value it received, which $HC$ counts but which splits its
allies from the rest; omitting inputs without a hole also needs captured groups
at the levels whose work it shows. With $w=0$ and no path there is no such
check, so an instance must use at least one. The same evidence repairs holes.
Some of a node's peers at the next level computed the same value through other
ally groups, so a node that recorded a hole can fill it from a peer's window that
carries the missing value, and recompute.

Apart from filling holes from windows, nodes do not reconcile values with peers,
so the $2^l$ groups computing a level-$l$ value agree only if they end with the
same holes and dissent. A merge goes wrong when they do not: with
$c=4$, if two members of one group's ally group have crashed, that group records
a hole while the other groups computing the same value receive it, and unless a
window fills the hole, the two sets of nodes finish on different roots. Such holes can cascade (\S\ref{sec:eval}), and the next cycle
reports the split as dissent.

\textbf{Dissent.} Dissent detects a split network (\S\ref{sec:banyan}) so that
acceptance rolls back instead of continuing on divergent roots. A node that receives a committee
packet built on a foreign root $CR'_{t-1}$, from a sender whose committee under
$CR'_{t-1}$ would include it, joins that committee as a dissenter as well as
taking part in its own. It does so only if $CR'_{t-1}$ builds on its own
$CR_{t-2}$: cycle $t-1$'s assignment is then common to both roots, so the
foreign committee can check the dissenter's evidence. Dissent must be provable.
The dissenter attaches its BuildPath, its record of cycle $t-1$, from its committee
certificate up to $CR_{t-1}$, with the $\tau$ peer and ally signatures that each
group sends alongside the merge values it creates, the root included; nodes keep
these until a cycle is final. A CSP's dissent field
counts the dissenters it includes, and a committee hole contributes zero. A node
joins at most one committee per foreign root, so the merged sum $DC$ counts
dissenting keys without double counting. Holes can still erase dissent: two Byzantine members of a four-node committee
can withhold signatures, turning an honest dissenter's committee into a hole. An
accepted root's holes hide at most $H_{\max}$ keys, and dissenters land where the foreign
root's assignment places them, so holes can hide at most about one eighth of a
split's dissent.

\textbf{Acceptance, rebuild and rollback.} A node accepts $CR_t=MP_L$ if it
verifies, $DC \le D_{\max}$ and $HC+DC \le H_{\max}$: holes and dissenters never
count toward a quorum of $\tau$. Both thresholds count keys out of $n$. We use $H_{\max}=n/8$ and
$D_{\max}=n/16$ as placeholders, not derived thresholds; $H_{\max}$ tolerates no
committee hole when $n/c<8$.
$D_{\max}$ is a policy knob: a positive value tolerates small splits and leaves
the minority to catch up (\S\ref{sec:security}), while $D_{\max}=0$ rolls back on
any verified split. Since $H_{\max}=n/8$, an accepted root may omit up to one
eighth of the committees: the root
counts the omission but does not attribute it, and an adversary that places
$\lceil c/3\rceil$ members in a committee chooses whose inputs are omitted.
Otherwise, exceeding $H_{\max}$ triggers a rebuild of cycle $t$, a new attempt; since the permutation key is
$CR_{t-1}$, a rebuild re-derives the same assignment, so a committee blocked by
withholders stays blocked. Excess dissent means many nodes accepted a different
$CR_{t-1}$: nodes return to $CR_{t-2}$ and rebuild cycle $t-1$; dissent takes
precedence over holes. Finality bounds rollback, so proofs stay usable: once a
node accepts cycle $t$, it treats $CR_{t-2}$ as final and never rolls back past
it. Two roots for one cycle that each gain an accepted successor, like
conflicting final roots, need more Byzantine nodes than
Theorem~\ref{thm:final} allows; nodes that see either halt, keep the evidence,
and restart only as the governance of \S\ref{sec:operating} decides.

\textbf{Proofs and root trust.} For each cycle, a node retains an internal
BuildPath containing its committee result and merge records, from which it
exports a proof: the committee certificate and one adjacent subtree summary per
merge level, with an optional local batch-inclusion claim. A verifier checks
that the committee signers form the committee assigned under $CR_{t-1}$,
recomputes the cycle root, and
compares it with a root it accepts for that cycle. The exported proof carries no
peer or ally signatures and does not certify network-wide acceptance of the root.
An outside verifier therefore needs an independent rule for choosing trusted
roots (\S\ref{sec:security}); compact aggregate root
certificates~\cite{boneh2018compact} remain future work.
With one successful build, a node sends $O(c(1+L))$ messages, whether or not it
has anything to commit; proof size and verification work are $O(L)$ given the
roster and $CR_{t-1}$, and retaining BuildPaths or proofs for $E$ cycles costs
$O(EL)$ for fixed $c$. Windows and paths enlarge merge packets, by about
$2^w-1$ merges or $L$ levels of signatures, but add no messages. These bounds
exclude retries and dissent.

\section{Security}\label{sec:security}

\emph{Model.} Within each membership epoch, the roster and public keys are fixed;
a signed future epoch can change the roster at a specified cycle. Hashes are collision
resistant; signatures are unforgeable and bind cycle, attempt, predecessor root, level,
position, phase and value; honest nodes sign at most one value per context. A
static adversary controls $f$ of the $n$ identities, fixed before a cycle's
permutation is known; adaptive corruption is outside the model. Honest messages
arrive within a bound $\Delta$ from which timeouts are set; a message delayed
beyond $\Delta$ has the same local effect as a failure.

\begin{lemma}[Local certificate uniqueness]\label{lem:attr}
For a fixed group of $c$ identities and a fixed signing context, two certificates
with $\tau=\lceil 2c/3\rceil$ signers share at least $2\tau-c$ signers. If fewer
than $c/3$ members are Byzantine, two distinct non-hole values cannot both be
certified; any two conflicting certificates identify at least $2\tau-c$
double-signers.
\end{lemma}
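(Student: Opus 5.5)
The plan is a quorum-intersection count followed by two short consequences. Fix the group $G$ of $c$ identities and the signing context. Let $S_1,S_2\subseteq G$ be the signer sets of the two certificates, so $|S_1|,|S_2|\ge\tau$. Inclusion--exclusion inside $G$ gives
\[
|S_1\cap S_2| = |S_1|+|S_2|-|S_1\cup S_2| \ge 2\tau-c .
\]
Two facts are needed here. Both sets lie in the same $c$-element group, because the verifier checks signers against the roster-derived group. Each certificate counts distinct identities, which follows from signature unforgeability: no signer can be counted on a certificate without its key. This is the first claim.

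For the second claim, we compute $2\tau-c$ with $\tau=\lceil 2c/3\rceil$. Since $\tau\ge 2c/3$, we get $2\tau-c\ge c/3$. Now suppose two certificates certify distinct non-hole values $v_1\neq v_2$ in the same context. Every member of $S_1\cap S_2$ has signed both values in that context. An honest node signs at most one value per context, so every node in the intersection is Byzantine. Hence at least $c/3$ members are Byzantine, which contradicts the hypothesis that fewer than $c/3$ are. Here we must check that a fraction-free bound suffices. We have $|S_1\cap S_2|\ge 2\tau-c\ge c/3 > f_G$, so the contradiction is strict with no rounding issue; the ceiling only helps.

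The third claim needs no assumption on the number of Byzantine members. Take any two conflicting certificates, meaning the same group and context but different values. Each signature in them is a valid signature by the named key, by unforgeability. So the $\ge 2\tau-c$ identities in $S_1\cap S_2$ each produced two signatures on distinct values in one context. The pair of certificates is a transferable proof of equivocation for those identities.

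The main subtlety is not the arithmetic but pinning down what "conflicting" and "same context" mean. The context fields listed in the model are cycle, attempt, predecessor root, level, position, phase and value-slot. Two certificates that differ in attempt or phase are not conflicting, so honest re-signing across view changes or rebuilds must not be counted as double-signing. I would also state that holes are excluded. A hole is a fixed string that no quorum signs, so the uniqueness claim is about non-hole values only. Finally, for PBFT view changes within an attempt, I would treat the commit certificate's context as excluding the view number. Then the honest one-value rule (enforced by PBFT's locking) applies across views.
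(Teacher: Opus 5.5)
Your proof is correct and is essentially the paper's argument: the two signer sets meet in at least $2\tau-c\ge c/3$ members, each of whom signed two distinct values in one context, so they are all Byzantine (contradicting the fewer-than-$c/3$ hypothesis) and the pair of certificates names them as double-signers. Your closing PBFT remark is not needed, since the lemma takes one-value-per-context as a model assumption, and it is also not accurate as stated: PBFT locking prevents conflicting commit quorums under the fault bound, but an honest replica may still send commits for different values in different views (when the earlier value never reached a quorum), so dropping the view number from the context would not by itself make honest nodes satisfy that assumption.
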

\noindent Two signer sets of size $\tau$ in a group of $c$ intersect in at least
$2\tau-c\ge\lceil c/3\rceil$ members, so with fewer than $c/3$ Byzantine members the
intersection contains an honest one, who signs once per context. For $c=16$,
conflicts name six signers. This is the accountability Cambium offers, as in
accountable agreement~\cite{civit2021polygraph}: equivocation is attributable,
withholding is not. Nodes keep conflicting certificates as evidence against
their signers; what happens to those keys is a membership decision
(\S\ref{sec:operating}).

\emph{Root-relative proof soundness.} Given an authentic accepted root and
membership configuration, collision resistance and signature checks bind a
certified input to its key and cycle root without disclosing other records.

\emph{Group sampling.} A group can diverge once it holds $b=c-\tau+1$ Byzantine
members, which is $\lceil c/3\rceil$ for $c$ a power of two: by withholding, they
show a value to some honest nodes and a hole to others. Under uniform assignment without
replacement, a group of $c$ does so with probability
\begin{equation}\label{eq:pbad}
p_{\mathrm{bad}}=\sum_{k=b}^{c}\binom{f}{k}\binom{n-f}{c-k}\Big/\binom{n}{c}.
\end{equation}
Eq.~\eqref{eq:pbad} assumes an unbiased seed; withholding and forced rebuilds bias
it, so the figures below are optimistic,
and the static-adversary assumption covers corruption, not assignment. At
$n=16{,}384$, $f=5{,}461\approx n/3$ and $c=16$, $p_{\mathrm{bad}}=0.45$:
group-level divergence is common, and it costs liveness, not safety
(Theorem~\ref{thm:final}).

\begin{theorem}[Finality]\label{thm:final}
Suppose $f<n/3$, honest nodes sign only for the root they accept, and each
cycle's assignment is a uniform permutation. Two conflicting roots for one cycle
both become final with probability at most $G\,e^{-\delta n/c}$. Here $G$ is the
number of assignments the adversary can choose between, and
$\delta=D(\tfrac78\,\|\,q_c)$, where $D$ is binary relative entropy and $q_c$ is
the probability that $c$ nodes drawn from a two-thirds share of the roster
include $\tau$ of that share. For $c=4$ and $c=16$, $\delta\approx0.19$ and
$0.25$.
\end{theorem}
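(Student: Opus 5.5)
We reduce finality to a counting argument over committees in the cycle that follows each conflicting root. Suppose $CR_t\ne CR'_t$ both become final. By the rule of \S\ref{sec:protocol}, a root for cycle $t$ is final only once some honest node has accepted a cycle-$(t{+}2)$ root built on it, which in turn requires an accepted cycle-$(t{+}1)$ successor. Fix one such successor built on $CR_t$. Its acceptance condition $HC+DC\le H_{\max}=n/8$ forces at least $\tfrac78$ of the $n/c$ committees in that cycle to produce genuine (non-hole) CSPs. Each genuine CSP carries $\tau$ member signatures on the predecessor root $CR_t$. By the hypothesis that honest nodes sign only for the root they accept, every honest signer in those committees accepted $CR_t$. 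The same holds for $CR'_t$ and its successor cycle. Since honest nodes accept one root per cycle (absent rollback, which finality forbids past $CR_{t-2}$), the honest nodes split into two disjoint camps $A$ and $A'$, and $|A|+|A'|\le n-f$.

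Next we turn the signing requirement into a sampling event. Take the smaller camp together with the Byzantine nodes, and call this set $S$. Since $|A|+|A'|\le n-f$ and $f<n/3$, one of the two sets $A\cup F$ or $A'\cup F$ has size at most $(n-f)/2+f<2n/3$. Call that set $S$, and let $CR_t$ be the root it supports. A committee can certify a CSP on $CR_t$ only if it contains $\tau$ members from $S$. Under a uniform permutation, the $n/c$ committees are draws of $c$ nodes without replacement from a roster in which $S$ has at most a two-thirds share. Each such committee therefore succeeds with probability at most $q_c$, and $q_c$ is monotone in the share, so the worst case is exactly two-thirds. Requiring at least $\tfrac78\cdot n/c$ successes among $n/c$ committees is then an upper-tail event of a sum of indicators. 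These indicators are negatively associated because the draw is without replacement: a partition of a uniform permutation into blocks with monotone block events. A Chernoff–Hoeffding bound for negatively associated variables therefore gives probability at most $e^{-(n/c)D(7/8\,\|\,q_c)}$, which requires $q_c<\tfrac78$ (checked numerically: $\delta\approx0.19,0.25$ for $c=4,16$).

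Finally, we account for adversarial choice. The permutation for cycle $t{+}1$ is keyed by $CR_t$, and the adversary can influence which assignment it faces through withholding, forced rebuilds and choosing among candidate roots. So we apply a union bound over the at most $G$ assignments it can select between, each of which is uniform by hypothesis, and obtain $G\,e^{-\delta n/c}$. The static-corruption assumption ensures $F$ is fixed before the permutation is known, and the camps are determined by the predecessor acceptance, so both are independent of the draw for each fixed candidate.

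The main obstacle is the first step: rigorously showing that every genuine CSP in the accepted successor cycle reflects $\tau$ signers who really support that root. The dissent mechanism lets nodes join foreign committees, and hole accounting in the merge could in principle let captured upper-level groups fabricate counts. We would handle this using the show-your-work windows and Lemma~\ref{lem:attr}: the root's $HC$ is recomputable from certified CSPs, so an accepted root cannot understate holes without captured groups. We would also have to argue that the camps are well defined despite nodes momentarily holding different roots in one attempt. That last point is where the argument might need the ``sign only for the accepted root'' hypothesis to be strengthened.
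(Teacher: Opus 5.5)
Your proposal follows the paper's proof sketch essentially step for step. A final root needs an accepted cycle-$(t{+}1)$ successor, which needs $\tfrac78$ of that side's committees to reach $\tau$ signers; single-root honest signing puts one side's signer pool below two thirds (the paper's version is that the shares sum to at most $1+f/n<4/3$); and negative association plus a Chernoff bound, with a factor $G$ for the adversary's choice of assignment, gives $G\,e^{-\delta n/c}$. The obstacles you flag at the end, camp switching across a rollback of cycle $t$ and merge-level understatement of holes, are not handled in the paper either: it simply takes ``honest nodes sign for one root'' and ``holes and dissenters never count toward a quorum'' as given, and notes that a rebuild re-derives the same assignment.
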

\begin{proof}[Proof sketch]
After a split at cycle $t$, each side's assignment for $t+1$ is keyed by its own
root, so each side must fill its own committees. Honest nodes sign for one root,
and holes and dissenters never count toward a quorum, so the sides' signer
shares, honest members plus all Byzantine ones, sum to at most $1+f/n<4/3$: all
but one side have under two thirds. Such a side accepts $t+1$ only if at least
seven eighths of its $n/c$ committees reach $\tau$ signers, while each does so
with probability below $q_c$; committee counts under a uniform permutation are
negatively associated, so a Chernoff bound gives $e^{-\delta n/c}$. A rebuild
re-derives the same assignment, so retries do not help, and a root becomes final
only after it gains accepted successors.
\end{proof}
\noindent At $n=16{,}384$ and $c=16$ the bound is $G\cdot10^{-111}$ per split; at
the evaluated $n=256$ and $c=4$ it is about $G\cdot10^{-5}$. A single deceived
node is a split of one: its root never gains a successor. The bound rests on
random assignment, as Eq.~\eqref{eq:pbad} does: adversarial placement could
finalise both sides from $f\ge0.2n$ at $c=16$. $G$ measures how far withholding
can steer a side's root before its assignment is fixed; seeding assignment from
an external beacon would make $G=1$. The proof needs only one accepted successor,
so finalising $CR_{t-1}$ should also be safe; we keep $CR_{t-2}$ to give dissent a
cycle to arrive, and leave the faster rule to future work.

\emph{Catch-up and root trust.} The stalled side of a split must rejoin the
final chain; techniques for moving from almost-everywhere to everywhere
agreement~\cite{king2009everywhere} suggest a periodic catch-up step. An outside
verifier need not
track every root: each root commits to its predecessor, so a verifier
that trusts one recent root, for instance one anchored in
Bitcoin~\cite{opentimestamps}, can check any earlier root on its chain from one
constant-size $MP_L$ preimage per intervening cycle.

\emph{Availability.} Safety holds for $f<n/3$; liveness does not, and denial of
progress is the residual attack. Dissent cannot be raised at will, because each
dissenter must present its own certified BuildPath. A key that claims a root no
honest node holds needs a false value certified somewhere on its path, and peer
signatures push this to the top level, where both the key's ally group and its
own top-level group, which signs the root, must hold a Byzantine quorum. Forcing
rollback needs more than $D_{\max}=n/16$ such keys: at $n=16{,}384$ and $c=16$,
about $f>0.59n$ in expectation, against $f>0.40n$ without peer signatures, so
below $n/3$ dissent cannot be forged; windows only raise these bounds. Withholding is stronger: committees with $b$ Byzantine members become holes at
will, and a cycle in which they exceed $H_{\max}$ stalls for as long as the
adversary withholds (\S\ref{sec:protocol}). Committee
holes alone reach $H_{\max}$ in expectation at $f/n\approx0.16$ for $c=4$ and
$0.22$ for $c=16$, and merge-level holes, each hiding up to $2^{L-1}$
committees, lower this further. Mixing the attempt number into the seed after a
failed rebuild would end the stall, at the cost of more influence over
assignment.

\section{Implementation and Evaluation}\label{sec:eval}

We implemented a Cambium prototype in Go.
We ran one node per AWS ECS Fargate task (2\,vCPU, 4\,GiB RAM) with $c=4$, an
EFS-backed proof store and the prototype's straggler simulation enabled.

\begin{table}[t]
\centering
\caption{Our AWS measurements, $c=4$, simulated stragglers enabled. Messages are
sent per node per cycle and averaged over runs, including retries.}
\label{tab:eval}
\small
\begin{tabular}{rrrrr}
\toprule
$n$ & merge levels & msgs sent/node & cycle time & verify time \\
\midrule
8   & 1 & 25.5 & 12\,ms  & 0.62\,ms \\
16  & 2 & 32.6 & 39\,ms  & 0.87\,ms \\
32  & 3 & 39.6 & 61\,ms  & 1.05\,ms \\
64  & 4 & 46.7 & 88\,ms  & 1.20\,ms \\
128 & 5 & 53.7 & 107\,ms & 1.37\,ms \\
256 & 6 & 60.6 & 148\,ms & 1.49\,ms \\
\bottomrule
\end{tabular}
\end{table}

In Table~\ref{tab:eval}, each merge level adds 6.9--7.1 messages per node, near
the $2c-1$ peer-and-ally contacts, and cycle time grows by 19--41\,ms per level.
Cycles cannot overlap, because each cycle's
assignment depends on the previous root, so input throughput rises only from 670
to 1{,}730 inputs per second as $n$ grows from 8 to 256; batching
(\S\ref{sec:req}) raises commitments per input.

\textbf{Simulated crash faults.} We simulated single cycles with random
fractions of silent nodes and random offset tables (Table~\ref{tab:sim}); at
$n=16{,}384$ and $c=16$, 8/16 cycles succeed at 5\%. The failures are isolated
ally-quorum misses (\S\ref{sec:protocol}): at each merge level $l\ge2$, a
level-$(l{-}1)$ group, one of $n/c$, with at least $b$ of its $c$ allies crashed
records a hole that the other groups computing the same value do not. Level 1
cannot split, since each level-0 stack is a single committee, so with $p$ given
by Eq.~\eqref{eq:pbad} for $f$ crashed nodes, a cycle succeeds with probability about $e^{-(L-1)(n/c)p}$. This matches
eight of the nine cells of Table~\ref{tab:sim} to within three runs and predicts
7.7/16 at $n=16{,}384$; it overestimates $c=64$ at 20\%, where holes compound
across levels, as an ally hole changes a group's value and blocks quorum at the
next level. Filling holes from windows (\S\ref{sec:protocol}) targets exactly
these misses.

\begin{table}[t]
\centering
\caption{Simulated cycles ($n=4{,}096$, 40 per cell) in which every live node
finishes on the same root, by committee size and share of crashed nodes, with
the first-order model's prediction in parentheses. --- marks cells not run.}
\label{tab:sim}
\small
\begin{tabular}{rccccc}
\toprule
$c$ & 2\% & 5\% & 10\% & 15\% & 20\% \\
\midrule
16 & 40 (40) & 37 (35) & 0 (0)   & ---     & --- \\
32 & ---     & 40 (40) & 36 (35) & 0 (1)   & --- \\
64 & ---     & ---     & 40 (40) & 38 (39) & 3 (9) \\
\bottomrule
\end{tabular}
\end{table}

The AWS runs used $c=4$, which tolerates one Byzantine member per committee. We consider
doing further tests with larger $n$, differing $c$ values as well as tests with different crashes, withholding and conflicting root strategies part of future work.

\section{Related Work}\label{sec:related}

\textbf{Aggregators.} A single aggregator receives $O(n)$ inputs while each
client sends $O(1)$ messages; the aggregator chooses which inputs enter the
Merkle root. Keyless signature infrastructure~\cite{buldas2013ksi} spreads this
aggregation over a hierarchy of servers, but a server can drop the inputs below
it without accountability, and witness cosigning~\cite{syta2016cosi} attests to the root
without sharing the choice of inputs. Cambium distributes the load and the
choice. Both an aggregator and Cambium need a
root-trust rule. Secure in-network aggregation~\cite{chan2006secure} also builds
a commitment tree inside the network and lets each contributor check its own
inclusion, but relies on a trusted querier.

\textbf{Timestamping, transparency, consensus.} Timestamping uses hash
chains or trees~\cite{haber1991timestamp}; Bitcoin
anchoring~\cite{opentimestamps} could supply a wall-clock reference for
Cambium's roots. Certificate Transparency maintains Merkle
logs~\cite{rfc9162}. PBFT and HotStuff totally order
requests~\cite{castro1999pbft,yin2019hotstuff}; interactive
consistency and asynchronous common subset deliver vectors or
sets~\cite{pease1980reaching,miller2016honeybadger}. Cambium computes an
aggregate instead. Algorand samples committees from prior randomness~\cite{gilad2017algorand}.
Cambium uses its own roots for re-seating and leaves proofs with participants.

\textbf{Local-first and accountable ledgers.} TrustChain keeps a
per-agent chain of interactions without global
consensus~\cite{otte2020trustchain}, and ConTrib's per-participant ledger detects
fraud through record exchange~\cite{devos2020contrib}. Cambium binds such
evidence to a shared root. It could anchor Secure Scuttlebutt's single-writer log
heads~\cite{kermarrec2020ssb}. Byzantine-tolerant version
control~\cite{nasrulin2021dcvc} and secure CRDTs~\cite{jannes2022secure}
replicate state; Cambium retains digests. Local economies may rely on
double-spend detection~\cite{lavoie2022local}; per-cycle inputs give range
evidence when senders disclose complete batches.

\section{Operating Shared Infrastructure}\label{sec:operating}

In a shared-data cooperative, contributors retain records and proofs that
auditors check against accepted roots. This leaves three operational questions.

\textbf{Membership and recovery.} $CR_0$ fixes the roster, and Sybil resistance
is delegated to whoever governs it. That suits institutional ledgers run as
commons~\cite{barisone2025swissledger} and closed commons that condition
replication on contribution~\cite{lavoie2023closed}, but a roster concentrates
authority unless it is accountable: admission and removal rules, authenticated
roster versions that keep old proofs verifiable, and a record of who authorised
each restart. The prototype stages signed future membership epochs, each
authorised by a threshold certificate from the current epoch, without breaking
the root chain; it changes $n$ between powers of two with $c$ and $CR_0$ fixed.

\textbf{Evidence availability.} Local retention avoids copying every record
everywhere, but applications then need exportable proof bundles and independent
retention, and verifiers must distinguish missing evidence from a negative claim.

\textbf{Participation and exclusion.} Someone pays for slow members: either
everyone waits for them, or timeouts turn their slow links into holes, so
low-bandwidth or intermittently connected members may repeatedly miss cycles,
and rebuild burdens may fall on them. An adversary can also target holes at
chosen committees (\S\ref{sec:protocol}). Because holes make missing participation
visible in the committed structure, a deployment can measure inclusion rate and
repeated holes per participant, and respond by tuning timeouts, subsidising
operation or delegating hosting. The operational test
for a community is whether participants can keep usable evidence, contest
exclusion and change operators without losing the ability to verify their
records.

\section{Conclusion}

Cambium lets participants build a shared commitment root per cycle, with no
aggregator or global ordering, while records and proofs stay with their holders. Number of messages per node, total build latency and proof sizes all grow logarithmically with the size of the network.
Its final roots agree, with high probability, whenever fewer than a third of
the nodes are Byzantine;
its liveness needs fewer than that, and untested hole filling is the main
engineering gap. Governance, proof availability and inclusion decide whether, and which, community it would best serve.

\bibliographystyle{ACM-Reference-Format}
\bibliography{refs}

\end{document}